\newtheorem{definition}{Definition}
\newtheorem{theorem}{Theorem}
\newtheorem{lemma}{Lemma}
\newcommand{\Sh}[2]{\mathsf{Sh}(#1, #2)}
\newcommand{\Dp}[2]{\mathsf{Dp}(#1, #2)}
\newcommand{\Et}[3]{\mathsf{ET}(#1, #2, #3)}
\newcommand{\Ets}[2]{\mathsf{ET}(#1, #2)}
\title{Importing SMT and Connection proofs as expansion trees}
\author{Giselle Reis
\institute{INRIA-Saclay, France}
\email{giselle.reis@inria.fr}
}
\begin{document}
\maketitle

\begin{abstract}
Different automated theorem provers reason in various deductive systems and, thus,
produce proof objects which are in general not compatible. To understand and
analyze these objects, one needs to study the corresponding proof theory,
and then study the language used to represent proofs, on a prover by prover basis. 
In this work we present an implementation that takes SMT and Connection proof
objects from two different provers and imports them both as expansion trees. By
representing the proofs in the same framework, all the algorithms and tools
available for expansion trees (compression, visualization, sequent calculus
proof construction, proof checking, etc.) can be employed uniformly. The
expansion proofs can also be used as a validation tool for the proof objects
produced.
\end{abstract}
\vspace{-0.5cm}

\section{Introduction}
\vspace{-0.2cm}

The field of proof theory has evolved in such a way to create the most various
proof abstractions. Natural deduction, sequent calculus, resolution, tableaux,
SAT, are only a few of them, and even within the same formalism there might be
many variations. As a result, automated theorem provers will generate different
proof objects, usually corresponding to their internal proof representation. The
use of distinct formats has some disadvantages: provers cannot recognize each
others proofs; proofs cannot be easily compared; all analysis and algorithms need
to be developed on a prover by prover basis.

GAPT is a framework for proof theory that is able to represent, process and
visualize proofs. Currently it implements the sequent calculus LK
(with or
without equality rules) for first and higher order classical logic, Robinson's
resolution calculus \cite{robinson65}, the schematic calculus LKS \cite{weller13}
and expansion trees \cite{miller87}. GAPT also provides algorithms for
translating proofs between some of these formats, for cut-elimination (reductive
methods \`{a} la Gentzen \cite{gentzen35} and CERES \cite{ceres}), and
for cut-introduction (proof compression) \cite{hetzl14}, as well as an interactive proof
visualization tool \cite{dunchev13}. But all these tools depend on having proofs
to operate on.

In this work we show how to parse and translate SMT and Connection proofs from
veriT and leanCoP, respectively, into expansion proofs in GAPT. 
SMT are unsatisfiability proofs with respect to some theory and, in veriT, these
are represented by resolution refutations of a set including (instances of) the
axioms of the theory considered and the negation of the input formula.
Connection proofs decide first-order logic formulas by connecting literals of
opposite polarity in the clausal normal form of the input. These different
conceptions of proofs will be unified under the form of expansion proofs, which
can be considered a compact representation of sequent calculus proofs.

The advantages of this work is three-fold. 
First of all, the use of expansion proofs provides a compact
representation for otherwise big and hard to grasp proof objects. Using this
representation and GAPT's visualization tool, it is easy to see the theorem
that was proved and the instances of quantified formulas used. 
Second of all, the use of a common representation facilitates the comparison of
proofs and makes it possible to run and analyse algorithms developed for this
representation without the need to adapt it to different formats. In particular,
we have been using the imported proofs for experimenting proof compression via
introduction of cuts \cite{hetzl14}.
%the uniform representation of different proofs makes it possible to run and
%analyse algorithms without the need for adaptation, and also compare the
%proofs easily. 
Finally, it provides a simple sanity-check procedure and the
possibility of building LK proofs.

This paper is organized as follows. Section \ref{sec:exp_proofs} defines basic
concepts and extends the usual definition of expansion trees to accommodate
polarities. 
Section \ref{sec:importing} explains how to extract the necessary information
from both formats and how it is then used to build expansion trees.
%how to build expansion trees
%from the information extracted from the proofs and how this information is
%extracted from both different formats. 
Section \ref{sec:results} presents the
results of the transformation applied to a database of proofs in the considered
formats. It also discusses the advantages of having the proofs as expansion
trees. Section \ref{sec:related} discusses some related work and, finally,
Section \ref{sec:conclusion} concludes the paper pointing to
future work.
\vspace{-0.5cm}

\section{Expansion proofs}
\label{sec:exp_proofs}
\vspace{-0.2cm}

We will work in the setting of first-order classical logic. We introduce now a
few basic concepts.

\begin{definition}[Polarity in a sequent]
\label{def:polarity_in_seq}
Let $S = A_1, ..., A_n \vdash B_1, ..., B_m$ be a sequent. We will say that
formulas on the left side of $\vdash$, i.e, $A_1, ..., A_n$ have \emph{negative}
polarity while formulas on the right, i.e., $B_1, ..., B_m$ have \emph{positive}
polarity.
%A formula is said be
%\emph{positive} (\emph{negative}) in $S$ if it is \emph{positive}
%(\emph{negative}) in the corresponding formula: $A_1 \wedge ... \wedge A_n
%\rightarrow B_1 \vee ... \vee B_m$.
\end{definition}

\begin{definition}[Polarity]
Let $F$ be a formula and $F'$ a sub-formula of $F$. Then we can define the
\emph{polarity} of $F'$ in $F$, i.e., $F'$ can be \emph{positive} or
\emph{negative} in $F$, according to the following criteria:
\begin{itemize}
  \item If $F \equiv F'$, then $F'$ has the same polarity as $F$.
  \item If $F \equiv A \wedge B$ or $F \equiv A \vee B$ or $F \equiv \forall x.
  A$ or $F \equiv \exists x. A$ and $F$ is \emph{positive} (\emph{negative}),
  than $A$ and $B$ are \emph{positive} (\emph{negative}).
  \item If $F \equiv A \rightarrow B$ and $F$ is \emph{positive}
  (\emph{negative}), then $A$ is \emph{negative} (\emph{positive}) and $B$ is
  \emph{positive} (\emph{negative}).
  \item If $F \equiv \neg A$ and $'$ is \emph{positive} (\emph{negative}) then
  $A$ is \emph{negative} (\emph{positive}).
\end{itemize}
%
%\begin{itemize}
%  \item If $F \equiv F'$, then $F'$ has the same polarity as $F$.
%  \item If $F \equiv A \wedge B$ or $F \equiv A \vee B$ or $F \equiv \forall x.
%  A$ or $F \equiv \exists x. A$ and $F'$ is \emph{positive} (\emph{negative}) in
%  $A$ or $B$, then $F'$ is \emph{positive} (\emph{negative}) in $F$.
%  \item If $F \equiv A \rightarrow B$ and $F'$ is \emph{positive}
%  (\emph{negative}) in $B$, then $F'$ is \emph{positive} (\emph{negative}) in $F$.
%  \item If $F \equiv A \rightarrow B$ and $F'$ is \emph{positive}
%  (\emph{negative}) in $A$, then $F'$ is \emph{negative} (\emph{positive}) in
%  $F$.
%  \item If $F \equiv \neg A$ and $F'$ is \emph{positive} (\emph{negative}) in
%  $A$, then $F'$ is \emph{negative} (\emph{positive}) in $F$.
%\end{itemize}
%
\end{definition}

Throughout this document we will use $0$ for negative polarity, $1$ for positive
polarity and $\overline{p}$ to denote the opposite
polarity of $p$, for $p \in \{0,1\}$.

\begin{definition}[Strong and weak quantifiers]
\label{def:strong_weak_quant}
Let $F$ be a formula. If $\forall x$ occurs positively (negatively) in $F$, then
$\forall x$
is called a \emph{strong (weak) quantifier}. If $\exists x$ occurs positively
(negatively) in $F$, then $\exists x$ is called a \emph{weak (strong)
quantifier}. 
%Let $A_1, ..., A_n \vdash B_1, ..., B_m$ be a sequent. A quantifier
%is called strong (weak) in this sequent if it is strong (weak) in the
%corresponding formula $A_1 \wedge ... \wedge A_n \rightarrow B_1 \vee ... \vee
%B_m$.
\end{definition}

\emph{Strong} quantifiers in a sequent will be those introduced by the
inferences $\forall_r$ and $\exists_l$ in a sequent calculus proof.

Expansion proofs are a compact representation for first and higher order sequent
calculus proofs. They can be seen as a generalization of Gentzen's
mid-sequent theorem to formulas which are not necessarily prenex
\cite{miller87}. Expansion proofs are composed by expansion trees. 
An expansion tree of a formula $F$ has this formula as its root. Leaves are atoms
occurring in $F$ and inner nodes are connectives or a quantified sub-formula of
$F$. The edges from quantified nodes to its children are labelled with terms
that were used to instantiate the outer-most quantifier.
We extend the
original definition with the notion of formula polarity and use $\Pi$ and
$\Lambda$ for strong and weak quantifiers respectively in expansion trees.

\begin{definition}[Expansion tree]
\emph{Expansion trees} and a function $\Sh{E}{p}$ (for \emph{shallow}), that
maps an expansion tree $E$ to a formula with polarity $p \in \{ 0, 1 \}$, 
are defined inductively as follows:
\begin{itemize}
  \item If $A$ is an atom, then $A$ is an expansion tree with top node $A$ and
  $\Sh{A}{p} = A$ for any choice of $p$.
  \item If $E_0$ is an expansion tree, then $E = \neg E_0$ is an expansion tree
  with $\Sh{E}{\overline{p}} = \neg \Sh{E_0}{p}$.
  \item If $E_1$ and $E_2$ are expansion trees and $\circ \in
  \{\wedge,\vee\}$, then $E = E_1 \circ E_2$ is an expansion tree with
  $\Sh{E}{p} = \Sh{E_1}{p} \circ \Sh{E_2}{p}$.
  \item If $E_1$ and $E_2$ are expansion trees, then $E = E_1 \rightarrow E_2$
  is an expansion tree with $\Sh{E}{p} = \Sh{E_1}{\overline{p}} \rightarrow
  \Sh{E_2}{p}$.
  \item If $\{t_1, ..., t_n\}$ is a set of terms and $E_1, ..., E_n$ are
  expansion trees with $\Sh{E_i}{p} = A[x/t_i]$, then $E = \Lambda x. A +^{t_1}
  E_1 ... +^{t_n} E_n$ (denoting a node with $n$ children) is an expansion tree
  with $\Sh{E}{0} = \forall x. A$ and $\Sh{E}{1} = \exists x. A$.
  \item If $E_0$ is an expansion tree with $\Sh{E_0}{p} = A[x/\alpha]$ for an
  Eigenvariable $\alpha$, then $E = \Pi x. A +^{\alpha} E_0$ is an expansion tree
  with $\Sh{E}{0} = \exists x. A$ and $\Sh{E}{1} = \forall x. A$.
\end{itemize}
\end{definition}

Expansion trees can be mapped to a quantifier free formula via the \emph{deep}
function, which we also redefine taking the polarities into account.

\begin{definition}
We define the function $\Dp{\cdot}{p}$ (for \emph{deep}), $p \in \{0, 1\}$, 
that maps an expansion tree to a quantifier free formula of polarity $p$ as:
\begin{multicols}{2}
\begin{itemize}
  \item $\Dp{A}{p} = A$ for an atom $A$.
  \item $\Dp{\neg A}{p} = \neg \Dp{A}{\overline{p}}$
  \item $\Dp{A \circ B}{p} = \Dp{A}{p} \circ \Dp{B}{p}$ for $\circ \in \{\wedge,
  \vee\}$
  \item $\Dp{A \rightarrow B}{p} = \Dp{A}{\overline{p}} \rightarrow \Dp{B}{p}$
  \item $\Dp{\Lambda x. A +^{t_1} E_1 ... +^{t_n} E_n}{0} = \bigwedge_{i=1}^n
  \Dp{E_i}{0}$
  \item $\Dp{\Lambda x. A +^{t_1} E_1 ... +^{t_n} E_n}{1} = \bigvee_{i=1}^n
  \Dp{E_i}{1}$
  \item $\Dp{\Pi x. A +^{\alpha} E}{p} = \Dp{E}{p}$
\end{itemize}
\end{multicols}
\end{definition}

\begin{definition}[Expansion sequent]
An \emph{expansion sequent} $\varepsilon$ is denoted by $E_1, ..., E_n \vdash
F_1, ..., F_m$ where $E_i$ and $F_i$ are expansion trees. Its \emph{deep
sequent} is the sequent $\Dp{E_1}{0}, ..., \Dp{E_n}{0} \vdash \Dp{F_1}{1}, ...,
\Dp{F_m}{1}$ and its \emph{shallow sequent} is $\Sh{E_1}{0}, ..., \Sh{E_n}{0}
\vdash \Sh{F_1}{1}, ..., \Sh{F_m}{1}$.
\end{definition}

An expansion sequent may or may not represent a proof. To decide whether this is
the case, we need to reason on the \emph{dependency relation} in the sequent.

\begin{definition}[Domination]
A term $t$ is said to \emph{dominate} a node $N$ in an expansion tree if it labels a
parent node of $N$.
\end{definition}

\begin{definition}[Dependency relation]
Let $\varepsilon$ be an expansion sequent and let $<_{\varepsilon}^0$ be the binary relation on the
occurrences of terms in $\varepsilon$ defined as: $t <_{\varepsilon}^0 s$ if there is an $x$ free in
$s$ that is an eigenvariable of a node dominated by $t$. Then $<_{\varepsilon}$, the
transitive closure of $<_{\varepsilon}^0$, is called the \emph{dependency relation} of $E$.
\end{definition}

\begin{definition}[Expansion proof]
An expansion sequent is considered an \emph{expansion proof} if its deep sequent
is a tautology and the dependency relation is acyclic. 
\end{definition}
%By the soundness and
%completeness of expansion trees, the shallow sequent of an expansion sequent is
%valid iff the expansion sequent is an expansion proof.
%
Intuitively, the dependency relation gives an ordering of quantifier inferences
in a sequent calculus proof of the shallow sequent of $\varepsilon$. That is, $t
<_{\varepsilon} s$
means that the existential quantifiers instantiated with $t$ must occur lower in
the proof than those instantiated with $s$. Using this relation it is possible
to build an LK proof from an expansion proof \cite{miller87}.
\vspace{-0.5cm}

\section{Importing}
\label{sec:importing}
\vspace{-0.2cm}

GAPT\footnote{\url{https://github.com/gapt/gapt}} is a framework for proof
transformations implemented in the programming language Scala. It supports
different proof formats, such as LK (with or without equality) for first and
higher order logic, Robinson's resolution calculus \cite{robinson65}, the
schematic calculus LKS \cite{weller13} and, more recently, expansion trees. It
provides various algorithms for proofs, such as reductive cut-elimination
\cite{gentzen35},
cut-elimination by resolution \cite{ceres}, cut-introduction \cite{hetzl14},
Skolemization, and translations between the proof formats.
GAPT also comes with \texttt{prooftool} \cite{dunchev13}, an interactive proof
visualization tool supporting all these formats.

VeriT and leanCoP are automated theorem provers that produce unsatisfiability (in
the shape of a resolution refutation) and connection proofs respectively. Both
output the proof objects to a structured text file, having in
common the fact that all inferences are listed with the operands and the
conclusion. We have implemented parsers (using Scala's parser combinators) for
both formats in GAPT. By taking the necessary information of each proof file and
processing it accordingly, we can build expansion proofs. We explain
the kind of processing needed for each format in Sections \ref{sec:verit} and
\ref{sec:leancop}.

The expansion tree of a formula with associated substitutions to its bound
variables can be defined as follows:

\begin{definition}
\label{def:formula_to_et}
Let $F$ be a formula in which all bound variables have pairwise distinct names,
$\Sigma$ a set of substitutions
for these variables and $p \in \{ 0, 1\}$ a polarity. 
Assume that each strong quantifier in $F$ is bound to exactly one term in
$\Sigma$.  
We define the function $\Et{F}{\Sigma}{p}$ that translates a formula to an
expansion tree as follows:
\begin{itemize}
  \item $\Et{A}{\Sigma}{p} = A$, where $A$ is an atom.
  \item $\Et{\neg A}{\Sigma}{p} = \neg \Et{A}{\Sigma}{\overline{p}}$.
  \item $\Et{A \circ B}{\Sigma}{p} = \Et{A}{\Sigma}{p} \circ \Et{B}{\Sigma}{p}$,
  for $\circ \in \{\wedge,\vee\}$.
  \item $\Et{A \rightarrow B}{\Sigma}{p} = \Et{A}{\Sigma}{\overline{p}}
  \rightarrow \Et{B}{\Sigma}{p}$.
  \item $\Et{\forall x. A}{\Sigma}{0} = \Lambda x. A +^{t_1}
  \Et{A\sigma_1}{\{\sigma_1\}}{0} ... +^{t_n} \Et{A\sigma_n}{\{\sigma_n\}}{0}$,
  where $\sigma_i$ is the substitution in $\Sigma$ mapping $x$ to $t_i$ ($n$ is
  the number of times the weak quantifier was instantiated).
  \item $\Et{\forall x. A}{\Sigma}{1} = \Pi x. A +^{\alpha}
  \Et{A\sigma'}{\{\sigma'\}}{1}$ where $\sigma'$ is the substitution in
  $\Sigma$ mapping $x$ to $\alpha$.
  \item $\Et{\exists x. A}{\Sigma}{0} = \Pi x. A +^{\alpha}
  \Et{A\sigma'}{\{\sigma'\}}{0}$ where $\sigma'$ is the substitution in
  $\Sigma$ mapping $x$ to $\alpha$.
  \item $\Et{\exists x. A}{\Sigma}{1} = \Lambda x. A +^{t_1}
  \Et{A\sigma_1}{\{\sigma_1\}}{1} ... +^{t_n} \Et{A\sigma_n}{\{\sigma_n\}}{1}$,
  where $\sigma_i$ is the substitution in $\Sigma$ mapping $x$ to $t_i$ ($n$ is
  the number of times the weak quentifier was instatiated).
\end{itemize}

Note that the term $\alpha$ used for the strong quantifiers is determined by the
substitution set $\Sigma$. If the eigenvariable condition is not satisfied in
these substitutions, then the resulting expansion tree will not be a proof of
the formula.
\end{definition}

Using the $\Et{F}{\sigma}{p}$ transformation, it is also possible to define the
expansion sequent $\varepsilon$ from a sequent $S$.

\begin{definition}
\label{def:sequent_to_es}
Let $S: A_1, ..., A_n \vdash B_1, ..., B_m$ be a sequent with pairwise distinct
bound variables and $\sigma$ a set of substitutions for those variables such
that each strongly quantified variable is bound to exactly one term. Then we
define $\Ets{S}{\sigma}$ as the expansion sequent $\Et{A_1}{\sigma}{0}, ...,
\Et{A_n}{\sigma}{0} \vdash \Et{B_1}{\sigma}{1}, ..., \Et{B_m}{\sigma}{1}$.
\end{definition}

Definitions \ref{def:formula_to_et} and \ref{def:sequent_to_es} show how to
build an expansion sequent from a sequent and a set of substitutions. The
requirement of pairwise distinct variables can be easily satisfied by a variable
renaming. The second requirement, that each variable of a strong quantifier is
bound only once, might not be true for arbitrary proofs. Fortunately, it holds
for the proofs we are dealing with, either because the input problem contains no strong
quantifiers, or because the end-sequent is skolemized. On the second case, it is
possible to deduce unique Eigenvariables for each strong quantifier and obtain
the expansion tree of the un-skolemized formula.

\begin{lemma}
\label{lmm:shallow_of_et}
$\Sh{\Et{F}{\sigma}{p}}{p} = F$
\end{lemma}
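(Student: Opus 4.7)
The plan is to prove the identity by structural induction on the formula $F$, with the induction measure taken to be the number of logical connectives and quantifiers. Both $\mathsf{ET}$ and $\mathsf{Sh}$ are defined by parallel cases on the shape of $F$, so the proof amounts to checking that each clause of the definition of $\mathsf{ET}$ is inverted by the corresponding clause of $\mathsf{Sh}$, with polarity flips matching on the two sides.

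First I would handle the base case where $F$ is an atom $A$: here $\Et{A}{\sigma}{p} = A$ and $\Sh{A}{p} = A$, so both sides equal $F$. For the propositional cases, negation uses the fact that $\Sh{\neg E}{p} = \neg \Sh{E}{\overline{p}}$ together with $\Et{\neg A}{\sigma}{p} = \neg \Et{A}{\sigma}{\overline{p}}$; applying the induction hypothesis at polarity $\overline{p}$ gives $\Sh{\Et{A}{\sigma}{\overline{p}}}{\overline{p}} = A$, and hence the full formula reconstructs. The cases for $\wedge$, $\vee$ are symmetric and immediate; the case for $\rightarrow$ is analogous to negation, with the polarity flipped only on the antecedent.

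The quantifier cases require a little more care. Take for instance $F = \forall x.\,A$ with $p = 0$. By Definition~\ref{def:formula_to_et} we have
\[
\Et{\forall x.\,A}{\sigma}{0} = \Lambda x.\,A \; +^{t_1} \Et{A\sigma_1}{\{\sigma_1\}}{0} \; \cdots \; +^{t_n} \Et{A\sigma_n}{\{\sigma_n\}}{0},
\]
and by the clause defining $\mathsf{Sh}$ for $\Lambda$-nodes at polarity $0$, this shallow formula is $\forall x.\,A$ provided each child $\Et{A\sigma_i}{\{\sigma_i\}}{0}$ has shallow formula $A[x/t_i]$ at polarity $0$. Since $A\sigma_i$ has strictly fewer connectives plus quantifiers than $\forall x.\,A$, the induction hypothesis applies and yields $\Sh{\Et{A\sigma_i}{\{\sigma_i\}}{0}}{0} = A\sigma_i = A[x/t_i]$, as required. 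The dual cases $\Et{\exists x.\,A}{\sigma}{1}$, and the strong-quantifier cases $\Et{\forall x.\,A}{\sigma}{1}$ and $\Et{\exists x.\,A}{\sigma}{0}$ (where we use a $\Pi$-node labeled by the Eigenvariable $\alpha$ supplied by $\sigma$), proceed identically.

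The only real subtlety, and thus the main thing to be careful about, is making sure the induction measure accommodates the fact that the recursive call in a quantifier case happens on $A\sigma_i$ rather than on $A$ itself. Since substitution does not introduce or remove connectives or quantifiers, the complexity of $A\sigma_i$ equals that of $A$, which is strictly less than that of $\forall x.\,A$ or $\exists x.\,A$; so structural induction on the logical skeleton of $F$ is well-founded and the argument closes. No further hypotheses on $\sigma$ are needed for this lemma — the Eigenvariable condition plays no role here, it only matters for whether the resulting expansion tree is an expansion proof.
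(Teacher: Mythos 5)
Your proof is correct and takes the same route as the paper, which simply states that the identity ``follows from the definition of $\Et{F}{\sigma}{p}$ and $\Sh{E}{p}$''; you have merely spelled out the structural induction that this one-line remark leaves implicit. Your attention to the quantifier cases (checking $\Sh{\Et{A\sigma_i}{\{\sigma_i\}}{p}}{p} = A[x/t_i]$ and noting that substitution preserves logical complexity, so the induction is well-founded) is exactly the bookkeeping the paper elides.
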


\begin{proof}
Follows from the definition of $\Et{F}{\sigma}{p}$ and $\Sh{E}{p}$.
\end{proof}

\begin{theorem}
\label{thm:validity}
A sequent $S$ with substitutions $\sigma$, such that each strongly quantified
variable in $S$ is bound exactly once, is valid iff the expansion sequent
$\Ets{S}{\sigma}$ is an expansion proof.
\end{theorem}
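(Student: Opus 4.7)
The plan is to prove both directions of the biconditional by leveraging Lemma~\ref{lmm:shallow_of_et} and the theory of expansion proofs due to Miller~\cite{miller87}. Applied component-wise to each expansion tree in $\Ets{S}{\sigma}$, Lemma~\ref{lmm:shallow_of_et} yields that the shallow sequent of $\Ets{S}{\sigma}$ is exactly $S$. What remains is therefore to relate validity of $S$ to the two conditions defining an expansion proof, namely that the deep sequent is a tautology and that the dependency relation is acyclic.

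For the direction from expansion proof to validity, I would appeal to the soundness of expansion proofs. From any expansion proof one can construct a cut-free LK derivation of the shallow sequent (this is exactly the content sketched in the paragraph following Definition~5, and proved in detail by Miller). Since the shallow sequent in question is $S$, validity of $S$ follows at once.

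For the converse, assume $S$ is valid, and verify the two required conditions for $\Ets{S}{\sigma}$. The deep sequent is quantifier-free and assembled from the instances of the weak quantifiers prescribed by $\sigma$; in the paper's setting $\sigma$ is the Herbrand-style witness extracted by the importer from the source proof, so its instances collapse the deep sequent to a propositional tautology. For acyclicity of the dependency relation, the hypothesis that each strongly quantified variable is bound exactly once means every eigenvariable $\alpha$ originates from a unique $\Pi$-node; any free occurrence of $\alpha$ in another term therefore reflects a genuine ordering of quantifier inferences in the source proof, and that ordering is well-founded. The main obstacle, and the step I would treat most carefully, is the tautologicity of the deep sequent: validity of $S$ alone does not give it, so the argument must refer explicitly to how the importers in Sections~\ref{sec:verit} and~\ref{sec:leancop} build $\sigma$ so that the weak-quantifier instances it induces are exactly those witnessing the derivation.
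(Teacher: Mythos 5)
Your treatment of the soundness direction and your use of Lemma~\ref{lmm:shallow_of_et} to identify the shallow sequent of $\Ets{S}{\sigma}$ with $S$ coincide exactly with the paper's. The divergence is in the converse direction. The paper's entire proof is a one-line appeal to the ``soundness and completeness of expansion sequents'' \cite{miller87}, quoted in the form \emph{an expansion sequent is an expansion proof iff its shallow sequent is valid}; together with Lemma~\ref{lmm:shallow_of_et} this closes both directions at once, with no inspection of the deep sequent or the dependency relation. You instead try to verify the two defining conditions of an expansion proof by hand, and you correctly observe that validity of $S$ alone does not make the deep sequent of $\Ets{S}{\sigma}$ a tautology.

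That observation is right, and it points at a weakness of the statement rather than of your argument. For an arbitrary $\sigma$ the ``valid $\Rightarrow$ expansion proof'' direction fails: for the valid sequent $\vdash \exists x.\,(P(x) \vee \neg P(a))$ with $\sigma$ instantiating $x$ only by $b$, the deep sequent $\vdash P(b) \vee \neg P(a)$ is not a tautology, so the resulting expansion sequent is not an expansion proof. Miller's completeness theorem guarantees that a valid sequent has \emph{some} expansion proof, not that every expansion sequent with a valid shallow sequent is one; the paper's quoted iff form silently strengthens the citation, and that is where the work is hidden. Your proposed repair --- appealing to how the importers of Sections~\ref{sec:verit} and~\ref{sec:leancop} construct $\sigma$ --- is the only way to make the direction go through, but it imports hypotheses that are not in the theorem (and your acyclicity argument likewise leans on ``the source proof,'' which the statement never mentions). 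So your proposal, as written, leaves the hard direction unfinished; to reproduce the paper you would simply cite the iff form of Miller's result and stop, whereas to obtain a theorem that is true as stated one must either add a hypothesis on $\sigma$ or weaken the conclusion to ``$S$ is valid iff there exists a $\sigma$ such that $\Ets{S}{\sigma}$ is an expansion proof.''
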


\begin{proof}
By the soundness and completeness of expansion sequents \cite{miller87}, we know that
an expansion sequent $\varepsilon$ is an expansion proof iff its shallow sequent
is valid.
From Lemma \ref{lmm:shallow_of_et} we have that the shallow sequent of
$\Ets{S}{\sigma}$ is $S$. Therefore, $S$ is valid iff $\Ets{S}{\sigma}$ is an
expansion proof.
\end{proof}

This theorem provides a ``sanity-check'' for the expansion sequents extracted
from proof objects. If it is an expansion proof, we know that, at least, the
end-sequent with the given substitutions is a tautology. Note that this does not
provide a check for the proof, as it is not validating each inference applied,
but only if the claimed instantiations \emph{can} actually lead to a proof.
\vspace{-0.5cm}

% advantages: compact
% input for cut-introduction algorithm
% possibility of visualization
% possibility of transforming into an LK proof (and thus checking)

% distinguish the proof objects on the proof theoretical side from the syntax
% used to actually output them in the tools

\subsection{SMT proofs}
\label{sec:verit}
\vspace{-0.2cm}

SMT (\emph{Satisfiability Modulo Theory}) is a decision procedure for
first-order formulas with respect to a background theory. It can be seen as a
generalization of SAT problems.
VeriT\footnote{\url{http://www.verit-solver.org/}} is an
open-source SMT-solver which is complete for quantifier-free formulas with
uninterpreted functions and difference logic on reals and integers. For this
work we have used the proof objects produced by VeriT on the \texttt{QF\_UF}
(quantifier-free formulas with uninterpreted function symbols) problems of the
SMT-LIB\footnote{\url{http://smt-lib.org/}}. The background theory in this case was the equality
theory composed by the axioms (symmetry and reflexivity are implicit):
\vspace{-0.3cm}
\begin{align*}
&\forall x_0 ... \forall x_n. (x_0 = x_1 \wedge ... \wedge x_{n-1} = x_n \rightarrow x_0 = x_n)\\
&\forall x_0 ... \forall x_n \forall y_0 ... \forall y_n.
((x_0 = y_0 \wedge ... \wedge x_n = y_n \rightarrow f(x_0, ..., x_n) = f(y_0, ..., y_n))\\
&\forall x_0 ... \forall x_n \forall y_0 ... \forall y_n.
(x_0 = y_0 \wedge ... \wedge x_n = y_n \wedge p(x_0, ..., x_n) \rightarrow p(y_0, ..., y_n))\\
\end{align*}
\vspace{-1.2cm}

The proofs generated are composed of CNF transformations and a resolution
refutation, whose leaves are either one
of the quantifier-free formulas from the input problem or an instance of an
equality axiom. 
The proof object consists of a comprehensive list of labelled clauses used in the
resolution proof and their origin. They are either an input clause, without
ancestors, or the result of an inference rule on other clauses, which is
specified via the labels. VeriT's proof is purely propositional and no substitutions
are involved, since the axioms are quantifier-free and contain no free-variables.

The input problem is propositional, therefore the only substitutions needed were the ones
instantiating the (weak) quantifiers of the equality axioms\footnote{Observe
that we do not need any information from the inference steps.}. These are found by
collecting the ground instances of these axioms occurring on the leaves of the
resolution proof and using a first-order matching algorithm. By matching the
instances with the appropriate axiom (without the quantifiers), we can obtain the
substitutions for the quantified variables.
Given those substitutions and the quantified axioms, we can build the
expansion trees. 
It is worth noting that the quantified equality axioms (i.e.,
transitivity, symmetry, reflexivity, etc.) are build
internally in GAPT, since these are not part of the proof object. Also, the
reflexivity instances needed are computed separately, since these are implicit
in veriT.
The expansion
tree of the (propositional) input formula can be built with an empty set of
substitutions.
%This procedure is
%simple and straightforward because the formula format is fixed. 
%Given the
%substitutions obtained from the matching and the quantified axioms (constructed
%``by hand''), we can build the
%expansion trees. The expansion tree of the ground input formula is trivial.
%
Since these are unsatisfiability proofs, all expansion trees will be on the left
side of the expansion sequent.
\vspace{-0.5cm}

\subsection{Connection proofs}
\label{sec:leancop}
\vspace{-0.2cm}

Connection calculi is a set of formalisms for deciding first-order classical
formulas which consists on connecting unifiable literals of opposite polarities
from the input. %Such connections correspond to an axiom in sequent calculus.
Proof search in these calculi is characterized as goal-oriented and, in general,
non-confluent. LeanCoP\footnote{\url{http://leancop.de/}} is a connection based theorem 
prover that implements a
series of techniques for reducing the search space and making proof search
feasible \cite{otten10}. 
Although its strategy is incomplete, it achieves very good performance
in practice. For this work, leanCoP 2.2 was used.
%and at the time of the
%publication this version was not yet officially released. 
It can be
obtained from the CASC24 competition
website\footnote{\url{http://pages.cs.miami.edu/~tptp/CASC/24/Systems.tgz}} or,
alternatively, executed online at
SystemOnTPTP\footnote{\url{http://pages.cs.miami.edu/~tptp/cgi-bin/SystemOnTPTP}}.

Given an input problem (a set of axioms and conjectures in the language of
first-order logic), leanCoP will negate the axioms, skolemize the formulas
and translate them into a disjunctive normal form (DNF). It works with a positive
representation of the problem and uses a special DNF
transformation that is more suitable for connection proof search
\cite{otten10}. The prover also adds equality axioms when necessary. 
%The calculus used for proof search is composed by four
%inferences: axiom, start, reduction and extension. The start rule will initiate
%proof search, which ends with axioms, and the reduction and extension rules
%connect literals, possibly using a unifier.
%
LeanCoP is able to produce proof objects in four different formats 
%(although they are all connection proofs). 
For this work, we have used
\texttt{leantptp}, which is closer to the TPTP (thousands of problems for
theorem provers) specification \cite{tptp}. 
The output
file is divided in three parts: (1) input formulas; (2) clauses generated from
the DNF transformation of the input and equality axioms; and (3) proof
description. Each part is described using a set of predicates with the relevant
information.

In part (1), the formulas from the input file are listed and named. Their
variables are renamed such that they are pairwise distinct. Moreover,
formulas are annotated with respect to their role, e.g, axiom or conjecture. Part (2)
contains the clauses, in the form of a list of literals, that resulted from the
disjunctive normal form transformation. This can either be the
regular naive  DNF translation or a \emph{definitional clausal form
transformation}, which assigns new predicates to some formulas. Each clause is
numbered and associated with the name of the formula that generated it. Equality
axioms are labelled with a special keyword, since they do not come from any
transformation on the input formulas. The proof \emph{per se} is in part (3),
where each line is an inference rule. It contains the number of the clause
to which the inference was applied, the bindings used (if any) and the
resulting clause.

For building the expansion trees of the input formulas we need the
substitutions used in the proof and the Skolem terms introduced during
Skolemization. The substitutions will be the terms of the expansion tree's weak
quantifiers and the Skolem terms, translated to variables, will be the expansion
tree's strong quantifier terms.
In the leanCoP proofs, Skolem terms have a specific syntax, so they can be
identified and parsed
as ``Eigenvariables''. We use this approach to get an expansion proof of the
original problem, instead of the skolemized problem. Since each strong
quantifier is replaced by exactly one Skolem term, the condition for the set of
substitutions in Definition \ref{def:formula_to_et} is satisfied.

The collection of terms used for the weak quantifiers is a bit more involved due
to variable renaming.
The quantified variables in the input formula are renamed during the clausal
normal form transformation. This means that the sets of variables occurring in
the original problem and in the clauses are disjoint. The substitutions used in
the proof are given with respect to the clauses' variables, but we are
interested in building expansion trees of the input formulas. We need therefore
to find a way to map the variables in the clauses to the variables in the input
formulas.
%The
%reason is that variables are renamed during the clausal normal form transformation, and the
%bindings in the proof are given with respect to the variables in the clauses,
%and not in the input formulas. Moreover, this renaming is not given in the proof
%object. 

The solution found was to implement in GAPT the definitional clausal
form transformation, trying to remain as faithful as possible to the one
leanCoP uses, but without the variable renaming. After applying our
transformation to the input formulas, we try to match the clauses obtained
to the clauses from the proof object. The first-order matching algorithm returns
a substitution if a match is found. Such substitution maps strongly quantified
variables to ``Eigenvariables'' (the result of parsing Skolem terms), and weakly
quantified variables to their renamed versions used in the clauses. 
By composing this substitution with
the ones obtained from the bindings in the proof, we are able to correctly
identify the terms used for each quantified variable in the input formulas.
\vspace{-0.5cm}

\section{Results}
\label{sec:results}
\vspace{-0.2cm}

We were able to import as expansion trees all the 142 proof objects provided to
us by the veriT team, and all but one under one minute.
%Of the 142 proof objects provided to us by the veriT team, we are able to
%import all of them as expansion trees. Currently all but one proof can be
%imported under one minute. 
The expansion sequents generated have been used as
input for the cut-introduction algorithm \cite{hetzl14} and some of their
features (e.g. high number of instances) have motivated improvements to
the algorithm.
As for leanCoP, our database consists of 3043 proofs of problems from the TPTP
library \cite{tptp}. Of those, we can successfully import 1224 as expansion sequents. Some
errors still occur while parsing and matching (e.g. our generated clauses do not
have the same literal ordering as the clauses in the proof file),
%the order of literals is not
%the same in our generated clauses and those in the file), 
but we are working to
increase the success rate.

%Given the proofs in the shape of expansion sequents, it is possible to do a
%number of things.
Getting proofs from various theorem provers in the shape of expansion sequents
allows us to do a number of interesting things.
First of all, one can visualize the end-sequent and the
instances used of each quantified formula. This is much more comfortable and
easier to grasp than a raw text file. 
%Figure \ref{fig:prooftool} shows the
%expansion sequent of a 584-line veriT proof.
%
%\begin{figure}
%\begin{centering}
%\includegraphics[scale=0.25]{prooftool.png}
%\caption{ProofTool: visualization of an expansion sequent. The last formula is
%the input problem which is proven unsatisfiable, while the others are the
%equality axioms used. Clause 5 is showing the terms used to substitute $x$
%during the proof.}
%\label{fig:prooftool}
%\end{centering}
%\end{figure}
%
It is also possible to check whether the instances used lead indeed to a proof
of the end-sequent. This is reduced to checking if the deep sequent of the
expansion sequent is a tautology (which can be done, as this sequent is
propositional) and if the dependency relation is acyclic. In case the expansion
sequent is a proof, we can build an LK proof from it, using the dependency
relation to decide the order in which quantifiers are introduced \cite{miller87}.
Finally, one can attempt proof compression and discovery of lemmas using the
cut-introduction algorithm \cite{hetzl14}.

All of these functionalities are implemented in GAPT. The system comes with an
interactive command line where commands for loading proofs, opening
\texttt{prooftool}, introducing cuts, eliminating cuts, building an LK proof
from an expansion sequent, among others, can be issued. Some examples of proofs
imported and their visualizations can be found at
\url{https://www.logic.at/staff/giselle/examples.pdf}.

\vspace{-0.5cm}

\section{Related Work}
\label{sec:related}
\vspace{-0.2cm}

Other projects and tools also address the issues of proof visualization and
checking. For proofs in the TPTP language in particular, there is IDV
\cite{Sutcliffe2007}, which provides an interactive interface for manipulating
the DAG representing a derivation. This tool focuses solely on visualization
of proofs in the TPTP format. Our work aims on a more general framework, of
which visualization is only a small part. 
%Other operations can be performed on
%the imported proofs and they can be visualized as expansion trees or sequent
%calculus proofs, for instance.
We are also capable to import different proof objects, not only those in the
TPTP language.

As for proof checking, \cite{KaliszykCPP2015} proposes a check of leanCoP proofs
in HOL Light while \cite{Armand2011} shows how to check SAT and SMT proofs using
Coq. The former paper involved re-implementing leanCoP's kernel in HOL
Light, which differs a lot from our approach of simply parsing the outputs of
theorem provers. In the latter, proofs produces by SAT/SMT theorem provers
are certified by Coq. We must clarify that, given the information needed to
produce expansion proofs, it is not fair to claim we are checking proof objects,
but we merely have a sanity check that the instances used by the theorem prover
actually lead to a proof of the proposed theorem. Such compromise makes sense if
we want a framework general enough to deal with different proof objects, without
asking any change on the side of theorem provers.

Finally, it is worth mentioning ProofCert \cite{proofcert}, a research project
with the aim of developing a theoretical framework for proof representation. In
order not to make such compromise, and actually check each step of each proof
for various different proof objects, a solid foundation of proof specification
needs to be developed. While this does not happen, this work shows how it is
still possible to combine existing proof objects into one representation.
\vspace{-0.5cm}

\section{Conclusion}
\label{sec:conclusion}
\vspace{-0.2cm}

We have shown how SMT and Connection proofs can be both imported as expansion
sequents. The information needed from the proof objects is just the end-sequent
being proven and a set of instances used for the quantified formulas. For both
cases presented we relied on a first-order matching algorithm, but this
requirement can be lifted if all substitutions are provided directly in the
proof object.

The representation using expansion sequents serves various purposes. It provides
an easy proof visualization, a simple checking procedure, LK proof construction
and introduction of cuts.

This is an ongoing work, and we hope to have many developments in the near future. In
particular, the difficulties in importing leanCoP proofs remain to be resolved. This
procedure also offers a lot of room for optimization. Once we have a big enough
set of parsed leanCoP proofs, we will add those to the benchmark used in the
cut-introduction algorithm. As for veriT proofs, we
plan to test bigger examples, as the ones provided are only a small
subset from the SMT-LIB.

Another future goal is importing other formats from other provers and comparing
the different proofs for the same input problem. We also aim on integrating 
a check for whether the obtained expansion sequent is an expansion proof in the
import function.
\vspace{-0.5cm}
%
% wrap it up
% what we learned
% nice way to visualize proofs
%
% improvements
% better import from leanCoP, really slow now
% future work
% add leanCoP proofs to the set of cut-introduction benchmark
% import bigger SMT proofs from veriT (this is a very small subset)
% proof comparison
%
%\paragraph{Acknowledgments}{
%The author would like to thank Pascal Fontaine and Jens Otten for clarifications
%about the tools used and fruitful discussions; Pascal Fontaine and Geoff
%Sutcliffe for providing the dataset of proofs; and Sonia Marin for comments on
%an early draft.}
%
%\nocite{*}
\bibliographystyle{eptcs}
\bibliography{references}
\end{document}